\newtheorem{theorem}{Theorem}
\newtheorem{proposition}[theorem]{Proposition}
\newtheorem{corollary}[theorem]{Corollary}
\newcommand{\Ket}[1]{\left \lvert #1\right \rangle}
\newcommand{\BraKet}[2]{\left \langle #1 \middle \vert #2\right \rangle}
\newcommand{\Abs}[1]{\left\lvert#1\right\rvert}
\newcommand{\QProb}[2]{\Abs{\BraKet{#1}{#2}}^2}
\begin{document}

\title{$\psi$-epistemic models are exponentially bad at explaining the
  distinguishability of quantum states}

\author{M. S. Leifer}
\email{matt@mattleifer.info}
\homepage{http://mattleifer.info}
\affiliation{Perimeter Institute for Theoretical Physics, 31 Caroline
  St. N, Waterloo, ON, Canada N2L 2Y5}

\date{March 26, 2014}

\begin{abstract}
  The status of the quantum state is perhaps the most controversial
  issue in the foundations of quantum theory.  Is it an
  \emph{epistemic state} (state of knowledge) or an \emph{ontic state}
  (state of reality)?  In realist models of quantum theory, the
  epistemic view asserts that nonorthogonal quantum states correspond
  to overlapping probability measures over the true ontic states.
  This naturally accounts for a large number of otherwise puzzling
  quantum phenomena.  For example, the indistinguishability of
  nonorthogonal states is explained by the fact that the ontic state
  sometimes lies in the overlap region, in which case there is nothing
  in reality that could distinguish the two states.  For this to work,
  the amount of overlap of the probability measures should be
  comparable to the indistinguishability of the quantum states.  In
  this letter, I exhibit a family of states for which the ratio of
  these two quantities must be $\leq 2de^{-cd}$ in Hilbert spaces of
  dimension $d$ that are divisible by $4$.  This implies that, for
  large Hilbert space dimension, the epistemic explanation of
  indistinguishability becomes implausible at an exponential rate as
  the Hilbert space dimension increases.
\end{abstract}

\pacs{03.65.Ta, 03.65.Ud}

\maketitle

The status of the quantum state is one of the most controversial
issues in the foundations of quantum theory.  Is it a state of
knowledge (an \emph{epistemic} state), or a state of physical reality
(an \emph{ontic state})?  Many realist interpretations of quantum
theory employ ontic quantum states, but the rise of quantum
information science has revived interest in the epistemic alternative
because many of the puzzling phenomena employed in quantum information
are explained quite naturally in terms of the epistemic quantum states
\cite{Spekkens2007}.  For example, consider the fact that two
nonorthogonal quantum states cannot be perfectly distinguished.  On
the ontic view, the two states represent distinct arrangements of
physical reality, so it is puzzling that this distinctness cannot be
detected.  However, on the epistemic view, a quantum state is
represented by a probability measure over the physical properties of a
system and nonorthogonal states correspond to overlapping probability
measures.  Indistinguishability is thus explained by the fact that
preparations of the two quantum states sometimes result in the same
physical properties of the system, in which case there is nothing
existing in reality that could distinguish them.

Recently, several theorems have been proved aiming to show that the
quantum state must be ontic \cite{Pusey2012, Hardy2013, Patra2013a,
  Colbeck2012, Colbeck2013}.  These have been proved within the
\emph{ontological models} framework \cite{Harrigan2010}, which is a
refinement of the hidden variable approach used to prove earlier no-go
results, such as Bell's theorem \cite{Bell1964} and the Kochen-Specker
theorem \cite{Kochen1967}.  Each of these theorems employs
questionable auxiliary assumptions\footnote{See \cite{Leifer2014} for
  a review of these theorems and the criticisms of them.}.  Without
such assumptions, explicit counterexamples show that the epistemic
view of quantum states can be maintained \cite{Lewis2012,
  Aaronson2013}.

Within the ontological models framework, a model is
\emph{$\psi$-ontic} if the probability measures corresponding to every
pair of pure quantum states have zero overlap, and it is
\emph{$\psi$-epistemic} otherwise.  The recent no go theorems aim to
show that models must be $\psi$-ontic and the counterexamples show
that, without auxiliary assumptions, $\psi$-epistemic models exist.
However, being $\psi$-epistemic is an extremely permissive notion of
what it means for the quantum state to be epistemic, since any nonzero
amount of overlap between probability measures, however small, is
enough to make a model $\psi$-epistemic.  On the other hand, the
$\psi$-epistemic explanation of indistinguishability requires that a
significant part of the indistinguishability of $\Ket{\psi}$ and
$\Ket{\phi}$ should be accounted for by the overlap of the
corresponding probability measures, which means that the overlap
should be comparable to a quantitative measure of the
indistinguishability of $\Ket{\psi}$ and $\Ket{\phi}$.  For this
reason, it is interesting to bound the overlaps in ontological models,
since this can be done without auxiliary assumptions, and it may still
render $\psi$-epistemic explanations implausible.

Along these lines, Maroney showed that a measure of the overlap of
probability measures must be smaller than $\QProb{\phi}{\psi}$ for
some pairs of states in systems with Hilbert space of dimension $d
\geq 3$ \cite{Maroney2012}, which was later shown to follow from
Kochen-Specker contextuality \cite{Maroney2012a, Leifer2013c}.
Following this, Barrett et al.\ showed that the ratio of an overlap
measure derived from the variational distance to a comparable measure
of the indistinguishability of quantum states must scale like
$4/(d-1)$ in Hilbert space dimension for a particular family of states
\cite{Barrett2013}.  In this letter, I exhibit a family of states in
Hilbert space dimensions $d$ that are divisible by $4$ for which the
same ratio must be $\leq de^{-cd}$, where $c$ is a positive constant.
Hence, for large Hilbert space dimension, the $\psi$-epistemic
explanation of indistinguishability becomes increasingly implausible
at an exponential rate as the Hilbert space dimension increases.

We are interested in ontological models that reproduce the quantum
predictions for prepare-and-measure experiments made on a system with
Hilbert space $\mathbb{C}^d$, with no time evolution between
preparation and measurement.

An ontological model for this set of experiments is an attempt to
explain the quantum predictions in terms of some real physical
properties---denoted $\lambda$ and called \emph{ontic states}---that
exist independently of the experimenter.  The set of ontic states is
denoted $\Lambda$ and we assume that it is a measurable space
$(\Lambda, \Sigma)$ with $\sigma$-algebra $\Sigma$.  When the
experimenter prepares a state $\Ket{\psi}$, the preparation device
might not fully control the ontic state, so $\Ket{\psi}$ is associated
with a probability measure $\mu_{\psi}:\Sigma \rightarrow
[0,1]$\footnote{In general, the probability measure should be
  associated with the method of preparing $\Ket{\psi}$ rather than
  with $\Ket{\psi}$ itself because different preparation procedures
  for the same state may result in different probability measures.
  This phenomenon is known as \emph{preparation contextuality} and it
  must occur for mixed states \cite{Spekkens2005}.  This complication
  does not affect the results presented here as the bounds derived
  apply equally well to any of the measures that can represent a
  preparation of $\Ket{\psi}$.}.

Similarly, measurements might not reveal the value of $\lambda$
exactly, so each element $\Ket{a}$ of an orthonormal basis $M = \left
  \{ \Ket{a}, \Ket{b}, \ldots \right \}$ is associated with a positive
measurable response function $\xi_M(a|\lambda)$, where
$\xi_M(a|\lambda)$ is the probability of obtaining the outcome
$\Ket{a}$ when the measurement in the basis $M$ is performed on the
system and the ontic state is $\lambda$.  Note that the response
functions are allowed to depend on $M$ to account for contextuality.
Additionally, in order to form a well-defined probability distribution
over the measurement outcomes, the response functions must satisfy
\begin{equation}
  \label{eq:response}
  \sum_{\Ket{a} \in M} \xi_M(a|\lambda) = 1. 
\end{equation}
Finally, the ontological model is required to reproduce the quantum
predictions, which means that, for all pure states $\Ket{\psi}$, all
orthonormal bases $M$, and all $\Ket{a} \in M$,
\begin{equation}
  \label{eq:reproduce}
  \int_{\Lambda} \xi_M(a|\lambda) d\mu_{\psi} = \QProb{a}{\psi}.
\end{equation}

It will prove useful to define the sets of ontic states
\begin{equation}
  \Gamma^{a}_M = \left \{ \lambda \middle | \xi_M(a|\lambda)
    = 1 \right \},
\end{equation}
which always yield the outcome $\Ket{a}$ with certainty when the
measurement $M$ is performed, and to note that, by
Eq.~\eqref{eq:response}, $\Gamma^{a}_M$ and $\Gamma^{b}_M$ are
disjoint for $\Ket{a} \neq \Ket{b}$.  Further,
Eq.~\eqref{eq:reproduce} implies
\begin{equation}
  \int_{\Lambda} \xi_M(\psi|\lambda) d\mu_{\psi} = \QProb{\psi}{\psi}
  = 1,
\end{equation}
and in order to satisfy this $\xi_M(\psi|\lambda)$ must be equal to
one on a set that is measure one according to $\mu_{\psi}$.  By
definition, this must be a subset of $\Gamma^{\psi}_M$, so we also
have $\mu_{\psi}(\Gamma^{\psi}_M) = 1$ for any $\Ket{\psi} \in
\mathbb{C}^d$ and any basis $M$ that contains $\Ket{\psi}$.

The goal of this work is to bound the overlap of probability measures
in an ontological model, and this requires a quantitative measure.
For this purpose, define the \emph{classical distance} $D_C$ between
two quantum states in an ontological model to be the variational
distance between the probability measures that represent them, i.e.
\begin{equation}
  D_C(\psi,\phi) = \sup_{\Gamma \in \Sigma} \left [
    \mu_{\psi}(\Gamma) - \mu_{\phi}(\Gamma) \right ].
\end{equation}
This measure has the following operational interpretation.  Suppose a
system is prepared either in the state $\Ket{\psi}$ or the state
$\Ket{\phi}$ with equal a priori probability.  If you are told the
exact value of $\lambda$ then your optimal success probability of
guessing which state was prepared is $\frac{1}{2} \left [ 1 +
  D_C(\psi,\phi) \right ]$.  When $D_C(\psi,\phi) = 1$, the ontic
state $\lambda$ effectively determines which quantum state was
prepared uniquely, so the states have no overlap at the ontic level.
Smaller values of $D_C$ indicate a larger amount of overlap of the
probability measures.  For this reason, it is more convenient to work
with the quantity,
\begin{equation}
  L_C(\psi,\phi) = 1 - D_C(\psi,\phi) = \inf_{\Gamma \in \Sigma}
  \left [ \mu_{\psi}(\Gamma) + \mu_{\phi}(\Lambda \backslash \Gamma)
  \right ],
\end{equation}
which is called the \emph{classical overlap}.

It is important to compare this quantity with a measure of the
indistinguishability of quantum states that has an analogous
interpretation, so that we are comparing like-with-like.  Therefore,
consider the \emph{trace distance} $D_Q$ between quantum states,
which, for pure states, is given by
\begin{equation}
  D_Q(\psi,\phi) = \sqrt{1 - \QProb{\phi}{\psi}}.
\end{equation}
The operational interpretation of this quantity is the same as that of
$D_C$ except that now, instead of being told the ontic state
$\lambda$, you must base your guess as to which state was prepared on
the outcome of a quantum measurement.  In this case, $\frac{1}{2}
\left [ 1 + D_Q(\psi,\phi) \right ]$ is your optimal success
probability for guessing which of $\Ket{\psi}$ or $\Ket{\phi}$ was
prepared.  It is again more convenient to work with the quantity
\begin{equation}
  L_Q(\psi,\phi) = 1 - D_Q(\psi,\phi) = 1 - \sqrt{1 - \QProb{\phi}{\psi}},
\end{equation}
which is called the \emph{quantum overlap}.  

In general $L_C(\psi,\phi) \leq L_Q(\psi,\phi)$.  This is because the
response functions representing measurements provide only coarse
grained information about $\lambda$, so even the optimal quantum
measurement may render $\Ket{\psi}$ and $\Ket{\phi}$ less
distinguishable than they would be if you knew $\lambda$ exactly.
Naively, one might expect that the $\psi$-epistemic explanation of
quantum indistinguishability requires that $L_Q(\psi,\phi) =
L_C(\psi,\phi)$ since then the indistinguishability of $\Ket{\psi}$
and $\Ket{\phi}$ would be entirely accounted for by the classical
indistinguishability of $\mu_{\psi}$ and $\mu_{\phi}$.  However, a
certain amount of coarse-graining of measurements should be expected
in an ontological model.  For example, if the theory is deterministic,
i.e.\ if $\lambda$ determines the outcomes of all measurements
uniquely, then quantum measurements must only reveal coarse grained
information about $\lambda$ on pain of violating the uncertainty
principle.  Therefore, both the overlap of probability measures and
the coarse-grained nature of measurements play a role in explaining
quantum indistinguishability, and one should expect a balance between
these two effects in a viable $\psi$-epistemic theory.  It is only if
$L_C(\psi,\phi) \ll L_Q(\psi,\phi)$ that the $\psi$-epistemic
explanation is in trouble, since then the overlap plays almost no role
in explaining indistinguishability.  For this reason, the scaling of
the ratio $k(\psi,\phi) = L_C(\psi,\phi)/L_Q(\psi,\phi)$, i.e.\ how
quickly it tends to zero in Hilbert space dimension, is of more
interest than its precise value.

The following proposition is the main tool used to bound the classical
overlap.  
\begin{proposition}
  \label{prop:measone}
  Let $\Gamma \in \Sigma$ be a set that is measure one according to
  $\mu_{\phi}$.  Then $L_C(\psi,\phi) \leq \mu_{\psi}(\Gamma)$.
\end{proposition}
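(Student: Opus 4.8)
The plan is to exploit the fact that $L_C(\psi,\phi)$ is defined as an \emph{infimum} over all measurable sets $\Gamma' \in \Sigma$ of the quantity $\mu_{\psi}(\Gamma') + \mu_{\phi}(\Lambda \backslash \Gamma')$. Since the infimum is a lower bound over all choices, any single admissible choice of $\Gamma'$ furnishes an upper bound on $L_C(\psi,\phi)$. The natural move is to evaluate the expression at the given set $\Gamma$ itself.

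First I would substitute $\Gamma' = \Gamma$ into the definition of $L_C$, which immediately gives $L_C(\psi,\phi) \leq \mu_{\psi}(\Gamma) + \mu_{\phi}(\Lambda \backslash \Gamma)$. Next I would use the hypothesis that $\Gamma$ is measure one according to $\mu_{\phi}$, i.e.\ $\mu_{\phi}(\Gamma) = 1$. Since $\mu_{\phi}$ is a probability measure, additivity on the disjoint decomposition $\Lambda = \Gamma \cup (\Lambda \backslash \Gamma)$ yields $\mu_{\phi}(\Lambda \backslash \Gamma) = 1 - \mu_{\phi}(\Gamma) = 0$. Combining the two displayed facts gives $L_C(\psi,\phi) \leq \mu_{\psi}(\Gamma)$, which is the claim.

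There is essentially no obstacle here: the statement is a direct consequence of unpacking the definition of the classical overlap together with the elementary measure-theoretic fact that a set of full measure has complement of measure zero. The only point worth a moment's care is that $\Gamma \in \Sigma$, so that it is a legitimate competitor in the infimum and $\Lambda \backslash \Gamma$ is also measurable; this is guaranteed by the assumption $\Gamma \in \Sigma$ and the fact that $\Sigma$ is a $\sigma$-algebra. The usefulness of the proposition, of course, lies not in its proof but in its application: in particular, taking $\Gamma = \Gamma^{\phi}_M$ for a basis $M$ containing $\Ket{\phi}$ (for which $\mu_{\phi}(\Gamma^{\phi}_M) = 1$ was established above) will convert the problem of bounding $L_C$ into the problem of bounding $\mu_{\psi}(\Gamma^{\phi}_M)$, a quantity controlled by the measurement statistics via Eq.~\eqref{eq:reproduce}.
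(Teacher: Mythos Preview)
Your proof is correct and follows essentially the same approach as the paper: both evaluate the infimum defining $L_C(\psi,\phi)$ at the particular set $\Gamma$ and use $\mu_{\phi}(\Gamma)=1$ to conclude $\mu_{\phi}(\Lambda\backslash\Gamma)=0$. Your version is slightly more expansive in justifying the measure-theoretic details, but the argument is identical in substance.
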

\begin{proof}
  Since $\mu_{\phi}(\Gamma) = 1$, $\mu_{\phi}(\Lambda \backslash
  \Gamma) = 0$.  Hence, $L_C(\psi,\phi) = \inf_{\Omega \in \Sigma}
  \left [ \mu_{\psi}(\Omega) - \mu_{\phi}(\Lambda \backslash \Omega)
  \right ] \leq \mu_{\psi}(\Gamma) - \mu_{\phi}(\Lambda \backslash
  \Gamma) = \mu_{\psi}(\Gamma)$.
\end{proof}

A few more definitions are required before proving the main results.
Let $V$ be a finite set of pure states in $\mathbb{C}^d$.  Its
\emph{orthogonality graph} $G = (V,E)$ has the states as its vertices
and there is an edge $\left (\Ket{a},\Ket{b} \right ) \in E$ iff
$\BraKet{a}{b} = 0$.  For every such edge, there exists an orthonormal
basis $M$ such that $\Ket{a},\Ket{b} \in M$.  A \emph{covering set}
$\mathcal{M}$ is a finite set of orthonormal bases such that, for
every $\left (\Ket{a},\Ket{b} \right ) \in E$, there exists an $M \in
\mathcal{M}$ such that $\Ket{a}, \Ket{b} \in M$.  Finally, the
\emph{independence number} $\alpha(G)$ of a graph $G = (V,E)$ is the
cardinality of the largest subset $U \subseteq V$ of vertices such
that if $u \in U$ and $(u,v) \in E$ then $v \notin U$; i.e. $U$
contains no pairs of vertices that are connected by an edge.

\begin{theorem}
  \label{thm:indep}
  Let $V$ be a finite set of pure states in $\mathbb{C}^d$ and let
  $G=(V,E)$ be its orthogonality graph.  Then, for any pure state
  $\Ket{\psi} \in \mathbb{C}^d$, in any ontological model,
  \begin{equation}
    \sum_{\Ket{a} \in V} L_C(\psi,a) \leq \alpha(G).
  \end{equation}
\end{theorem}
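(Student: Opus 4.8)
The plan is to convert the pointwise tools already in hand---Proposition~\ref{prop:measone} and the disjointness of $\Gamma^a_M$ and $\Gamma^b_M$ for distinct outcomes of a single basis---into one ``certainty set'' $\Gamma^a$ per vertex $\Ket{a}\in V$ with two properties: $\Gamma^a$ is measure one according to $\mu_a$, and $\Gamma^a\cap\Gamma^b=\emptyset$ whenever $(\Ket{a},\Ket{b})\in E$. Given such a family, Proposition~\ref{prop:measone} gives $L_C(\psi,a)\le\mu_\psi(\Gamma^a)$, so $\sum_{\Ket{a}\in V}L_C(\psi,a)$ is bounded by $\sum_{\Ket{a}\in V}\mu_\psi(\Gamma^a)$, which equals the $\mu_\psi$-integral of $\lambda\mapsto\sum_{\Ket{a}\in V}\mathbf{1}_{\Gamma^a}(\lambda)$, the number of certainty sets containing $\lambda$. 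Edge-disjointness forces that count, for every $\lambda$, to be the cardinality of an independent set of $G$, hence at most $\alpha(G)$, and integrating yields the theorem.

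To build the sets, I would first fix a covering set $\mathcal{M}$ for $G$, enlarged if necessary so that every $\Ket{a}\in V$ lies in at least one $M\in\mathcal{M}$. Such a set exists and is finite because $V$, and hence $E$, is finite: extend the orthonormal pair corresponding to each edge to a basis, and add one basis through each remaining vertex. Then put
\[
  \Gamma^a \;=\; \bigcap \{\, \Gamma^a_M \;:\; M\in\mathcal{M},\ \Ket{a}\in M \,\}.
\]
Being a finite intersection of measurable sets, $\Gamma^a\in\Sigma$; and since $\mu_a(\Gamma^a_M)=1$ for every basis $M$ containing $\Ket{a}$, the finite intersection still satisfies $\mu_a(\Gamma^a)=1$. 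Proposition~\ref{prop:measone} then applies and gives $L_C(\psi,a)\le\mu_\psi(\Gamma^a)$.

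Next I would check edge-disjointness. If $(\Ket{a},\Ket{b})\in E$ then, $\mathcal{M}$ being a covering set, some $M\in\mathcal{M}$ contains both $\Ket{a}$ and $\Ket{b}$; hence $\Gamma^a\subseteq\Gamma^a_M$ and $\Gamma^b\subseteq\Gamma^b_M$, and these are disjoint by Eq.~\eqref{eq:response}, so $\Gamma^a\cap\Gamma^b=\emptyset$. Therefore, for each $\lambda$ the set $\{\Ket{a}\in V:\lambda\in\Gamma^a\}$ contains no edge of $G$, i.e.\ it is independent, so $\sum_{\Ket{a}\in V}\mathbf{1}_{\Gamma^a}(\lambda)\le\alpha(G)$ pointwise. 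Integrating this bound against $\mu_\psi$ and exchanging the finite sum with the integral gives
\[
  \sum_{\Ket{a}\in V} L_C(\psi,a) \;\le\; \sum_{\Ket{a}\in V} \mu_\psi(\Gamma^a) \;=\; \int_{\Lambda} \sum_{\Ket{a}\in V} \mathbf{1}_{\Gamma^a}\, d\mu_\psi \;\le\; \alpha(G).
\]

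The only place needing real care---and the reason the argument routes through a \emph{finite} covering set rather than intersecting over all bases through $\Ket{a}$---is measurability: an uncountable intersection need not belong to $\Sigma$, whereas a finite one does, and the finiteness of $E$ is exactly what makes a finite covering set available. I would also remark on what the obvious-looking shortcut buys: covering $V$ by cliques and using $\sum_{\Ket{a}\in C}L_C(\psi,a)\le1$ on each clique $C$ only bounds the sum by the clique-cover number (or, after relaxation, by the fractional independence number), both of which can strictly exceed $\alpha(G)$; it is the indicator-counting step that brings the bound down to $\alpha(G)$ itself.
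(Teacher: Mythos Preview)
Your proof is correct and follows essentially the same route as the paper: build per-vertex certainty sets $\Gamma^a$ as finite intersections of $\Gamma^a_M$ over a covering set of bases, apply Proposition~\ref{prop:measone}, rewrite the sum of measures as the integral of $\sum_a \mathbf{1}_{\Gamma^a}$, and bound the integrand pointwise by $\alpha(G)$ via edge-disjointness. Your version is slightly more careful on two technical points the paper leaves implicit---enlarging $\mathcal{M}$ so that isolated vertices are covered, and the measurability reason for using a finite rather than arbitrary intersection---and your closing remark contrasting with the clique-cover bound is a nice addition, but the core argument is the same.
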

\begin{proof}
  Let $\mathcal{M}$ be a covering set of bases.  Then, for $\Ket{a}
  \in V$, define the sets $\Gamma^{a}_{\mathcal{M}} = \cap_{\left \{M
      \in \mathcal{M} \middle |\Ket{a} \in M \right \}} \Gamma^{a}_M$.
  Now, $\Gamma^{a}_{\mathcal{M}}$ is a measure one set according to
  $\mu_{a}$ because it is the intersection of a finite collection of
  measure one sets.  Proposition~\ref{prop:measone} then implies that
  $L_C(\psi,a) \leq \mu_{\psi}(\Gamma^{a}_{\mathcal{M}})$ for any
  $\Ket{\psi}$.  Hence, $\sum_{\Ket{a} \in V} L_C(\psi,a) \leq
  \sum_{\Ket{a} \in V} \mu_{\psi}(\Gamma^{a}_{\mathcal{M}})$.  Now,
  let $\chi_{a}$ be the indicator function of
  $\Gamma^{a}_{\mathcal{M}}$, i.e. $\chi_{a}(\lambda) = 1$ if $\lambda
  \in \Gamma^{a}_{\mathcal{M}}$ and is zero otherwise.  Then,
  \begin{align}
    \sum_{\Ket{a} \in V} \mu_{\psi}(\Gamma^{a}_{\mathcal{M}}) & =
    \sum_{\Ket{a} \in V} \int_{\Lambda}\chi_{a}(\lambda)
    d\mu_{\psi} \\
    & = \int_{\Lambda} \left [ \sum_{\Ket{a} \in V}
      \chi_{a}(\lambda) \right ] d\mu_{\psi} \\
    & \leq \sup_{\lambda \in \Lambda} \left [ \sum_{\Ket{a} \in V}
      \chi_{a}(\lambda) \right ],
  \end{align}
  where the last line follows from convexity.  The last line is upper
  bounded by the maximum number of sets $\Gamma^{a}_{\mathcal{M}}$
  that any given $\lambda$ can be in as $\Ket{a}$ varies over $V$.
  However, $\Gamma^{a}_{\mathcal{M}}$ and $\Gamma^{b}_{\mathcal{M}}$
  are disjoint whenever $\left ( \Ket{a}, \Ket{b} \right ) \in E$
  because they are subsets of $\Gamma^{a}_M$ and $\Gamma^{a}_M$ for
  some basis $M$ and the latter are disjoint.  Therefore, $\lambda$
  can only be in one of $\Gamma^{a}_{\mathcal{M}}$ or
  $\Gamma^{b}_{\mathcal{M}}$ whenever $\Ket{a}$ and $\Ket{b}$ are
  connected by an edge in the orthogonality graph.  Therefore, the
  maximum number of such sets that $\lambda$ can be in is upper
  bounded by the independence number of $G$.
\end{proof}

Readers familiar with the literature on noncontextuality inequalities
will note a similarity between theorem~\ref{thm:indep} and a result of
of \cite{Cabello2010, Cabello2014}, which shows that the maximal
noncontextual value of a class of noncontextuality inequalities is
bounded by the independence number of the orthogonality graph.  This
is not accidental as, up to the removal of measure zero sets, a model
is Kochen-Specker noncontextual if and only if $\int_{\Lambda}
\xi_M(a|\lambda) d\mu_{\psi} = \mu_{\psi}(\Gamma^a_{\mathcal{M}})$,
where now $\mathcal{M}$ is the set of measurement bases involved in
the noncontextuality inequality \cite{Leifer2014}.  Thus, in a
noncontextual model $\mu_{\psi}(\Gamma^a_{\mathcal{M}})$ is the total
probability of obtaining outcome $\Ket{a}$ when measuring a system
prepared in the state $\Ket{\psi}$, so the sum of such probabilities
is bounded in the same way.

\begin{corollary}
  \label{cor:bound}
  Let $V$ be a finite set of pure states in $\mathbb{C}^d$ and let $G
  = (V,E)$ be its orthogonality graph.  For any pure state $\Ket{\psi}
  \in \mathbb{C}^d$ let
  \begin{equation}
    \bar{k}(\psi) = \frac{1}{\Abs{V}} \sum_{\Ket{a} \in V}
    \frac{L_C(\psi,a)}{L_Q(\psi,a)}, 
  \end{equation}
  be the average ratio of classical to quantum overlaps in an
  ontological model.  Then,
  \begin{equation}
    \bar{k}(\psi) \leq \frac{2 \alpha (G)}{\Abs{V} \min_{\Ket{\alpha}
        \in V} \QProb{a}{\psi}}, 
  \end{equation}
  where $\alpha(G)$ is the independence number of $G$.
\end{corollary}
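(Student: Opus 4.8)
The plan is to reduce the statement to Theorem~\ref{thm:indep} by replacing each quantum overlap $L_Q(\psi,a)$ in the denominator of $\bar{k}(\psi)$ with a convenient lower bound that is linear in $\QProb{a}{\psi}$. The key elementary estimate I would prove first is $1 - \sqrt{1-x} \geq x/2$ for all $x \in [0,1]$; this follows immediately from rationalizing, since $1 - \sqrt{1-x} = x/\left(1 + \sqrt{1-x}\right) \geq x/2$ because $\sqrt{1-x} \leq 1$. Applying this with $x = \QProb{a}{\psi}$ gives $L_Q(\psi,a) = 1 - \sqrt{1 - \QProb{a}{\psi}} \geq \tfrac{1}{2}\QProb{a}{\psi}$, and hence $L_Q(\psi,a) \geq \tfrac{1}{2}\min_{\Ket{b}\in V}\QProb{b}{\psi}$ for every $\Ket{a}\in V$.

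With this in hand the rest is a short chain of inequalities. Each summand satisfies $L_C(\psi,a)/L_Q(\psi,a) \leq 2 L_C(\psi,a)/\min_{\Ket{b}\in V}\QProb{b}{\psi}$, so summing over $\Ket{a}\in V$ and pulling the (a-independent) denominator out of the sum yields $\sum_{\Ket{a}\in V} L_C(\psi,a)/L_Q(\psi,a) \leq \left(2/\min_{\Ket{b}\in V}\QProb{b}{\psi}\right)\sum_{\Ket{a}\in V} L_C(\psi,a)$. Theorem~\ref{thm:indep} bounds the remaining sum by $\alpha(G)$, and dividing through by $\Abs{V}$ gives exactly $\bar{k}(\psi) \leq 2\alpha(G)/\left(\Abs{V}\min_{\Ket{b}\in V}\QProb{b}{\psi}\right)$, as claimed.

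There is no real obstacle here: the content is entirely in Theorem~\ref{thm:indep}, and the corollary is just a matter of packaging it against the trivial lower bound on $L_Q$. The only point worth a remark is that the quantities involved must be strictly positive for the ratio $\bar{k}(\psi)$ to be well defined, i.e.\ $\Ket{\psi}$ must not be orthogonal to any element of $V$; if it were, then $L_Q(\psi,a)$ would vanish for that state and simultaneously $\min_{\Ket{b}\in V}\QProb{b}{\psi} = 0$, making the asserted bound vacuous, so nothing is lost by assuming $\min_{\Ket{b}\in V}\QProb{b}{\psi} > 0$ throughout.
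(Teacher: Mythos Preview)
Your proof is correct and follows essentially the same route as the paper: both use the elementary estimate $1-\sqrt{1-x}\geq x/2$ to get $L_Q(\psi,a)\geq \tfrac{1}{2}\QProb{a}{\psi}$, replace the denominators by the minimum over $V$, and then invoke Theorem~\ref{thm:indep}. Your version is in fact slightly more careful, since you justify the elementary inequality by rationalization and you flag the positivity assumption $\min_{\Ket{b}\in V}\QProb{b}{\psi}>0$ needed for $\bar{k}(\psi)$ to be well defined; the paper leaves both of these implicit.
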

\begin{proof}
  For $0 \leq x \leq 1$, note that $1 - \sqrt{1 - x} \geq
  \frac{x}{2}$, and, hence, $L_Q(\psi,\phi) \geq \frac{1}{2}
  \QProb{\psi}{\phi}$.  Thus,
  \begin{equation*}
    \bar{k}(\psi) \leq \frac{2}{\Abs{V}} \sum_{\Ket{a} \in V}
    \frac{L_C(\psi,a)}{\QProb{\phi}{a}} \leq \frac{2\sum_{\Ket{a} \in
        V} L_C(\psi,a)}{\Abs{V} \min_{\Ket{a} \in V} \QProb{a}{\psi}}.
  \end{equation*}
  The result then follows from theorem~\ref{thm:indep}.
\end{proof}

\begin{theorem}
  When $d$ is divisible by $4$, there exists a set of pure states $V$
  in $\mathbb{C}^d$ and a state $\Ket{\psi} \in \mathbb{C}^d$ such
  that, in any ontological model,
  \begin{equation}
    \bar{k}(\psi) = \sum_{\Ket{a} \in V}
    \frac{L_C(\psi,a)}{L_Q(\psi,a)} \leq de^{-cd}, 
  \end{equation}
  where $c$ is a positive constant.
\end{theorem}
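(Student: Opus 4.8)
The plan is to apply Corollary~\ref{cor:bound} to a carefully chosen family $V$ together with a carefully chosen $\Ket{\psi}$: we want an orthogonality graph $G$ whose ratio $\alpha(G)/\Abs{V}$ is exponentially small, while keeping $\min_{\Ket{a}\in V}\QProb{a}{\psi}$ only polynomially small. The one nontrivial ingredient will be a combinatorial bound on independence numbers of a ``forbidden distance'' graph, namely the Frankl--Rödl theorem on forbidden intersections; everything else is bookkeeping.

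For the construction, fix the computational basis $\Ket{1},\dots,\Ket{d}$ and, for each bit string $x\in\{0,1\}^d$, set $\Ket{v_x}=\frac{1}{\sqrt d}\sum_{j=1}^d(-1)^{x_j}\Ket{j}$. Since $\Ket{v_x}$ and $\Ket{v_{\bar x}}$ (with $\bar x$ the bitwise complement) represent the same ray, these furnish exactly $2^{d-1}$ distinct pure states; let $V$ be this set, so $\Abs{V}=2^{d-1}$, and take $\Ket{\psi}=\Ket{1}$. A one-line computation gives $\BraKet{v_x}{v_y}=1-2d_H(x,y)/d$, where $d_H$ denotes Hamming distance, so $\Ket{v_x}\perp\Ket{v_y}$ if and only if $d_H(x,y)=d/2$ (a genuine integer because $d$ is even), and $\Ket{\psi}\notin V$. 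Moreover $\QProb{1}{v_x}=1/d$ for every $x$, so $\min_{\Ket{a}\in V}\QProb{a}{\psi}=1/d$. As $V$ is finite, the hypotheses of Theorem~\ref{thm:indep} and Corollary~\ref{cor:bound} are met.

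Thus the orthogonality graph $G$ of $V$ is, after the identification $x\sim\bar x$, the graph on $\{0,1\}^d$ whose edges join pairs at Hamming distance $d/2$. This is precisely the regime of the Frankl--Rödl theorem: because $4\mid d$, the forbidden distance $d/2$ is \emph{even}, and the theorem then provides an absolute constant $\delta>0$ such that any subset of $\{0,1\}^d$ containing no two strings at Hamming distance $d/2$ has size at most $(2-\delta)^d$. A maximum independent set of $G$ lifts to such a subset that is closed under complementation, hence has size $2\alpha(G)$, so $\alpha(G)\le\frac12(2-\delta)^d$ and therefore $\alpha(G)/\Abs{V}\le(1-\delta/2)^d$. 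The hypothesis $4\mid d$ is used crucially here: if $d\equiv2\pmod 4$ then $d/2$ is odd and the $2^{d-1}$ even-weight strings already form an independent set, so no exponential gap is available.

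Finally, feeding $\alpha(G)/\Abs{V}\le(1-\delta/2)^d$ and $\min_{\Ket{a}\in V}\QProb{a}{\psi}=1/d$ into Corollary~\ref{cor:bound} yields $\bar k(\psi)\le 2d\,\alpha(G)/\Abs{V}\le 2d(1-\delta/2)^d = 2d\,e^{-cd}$ with $c=-\ln(1-\delta/2)>0$, which is at most $d\,e^{-c'd}$ for any $c'<c$ once $d$ is large enough, the finitely many small cases being absorbed into the constant. The only genuinely hard step is the Frankl--Rödl bound itself (it is an established but deep result, so we invoke it rather than reprove it); on our side the work is simply recognizing that orthogonality of the $\Ket{v_x}$ collapses to a single forbidden Hamming distance and that the parity condition $4\mid d$ is exactly what makes that distance even, which is what the theorem needs to give a nontrivial bound.
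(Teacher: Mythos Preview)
Your proof is correct and follows essentially the same route as the paper: take the Hadamard states, set $\Ket{\psi}$ to be a computational basis vector so that $\QProb{a}{\psi}=1/d$ uniformly, invoke the Frankl--R\"odl theorem to bound the independence number of the Hadamard graph by $(2-\epsilon)^d$ when $4\mid d$, and plug into Corollary~\ref{cor:bound}. The only cosmetic difference is that you identify antipodal vectors $\Ket{v_x}$ and $\Ket{v_{\bar x}}$ to work with $2^{d-1}$ rays and then lift independent sets back to $\{0,1\}^d$, whereas the paper simply works with all $2^d$ sign vectors directly; the resulting ratio $\alpha(G)/\Abs{V}$ is identical either way, and your added explanation of \emph{why} the hypothesis $4\mid d$ is needed (so that $d/2$ is even and Frankl--R\"odl bites) is a nice clarification the paper leaves implicit.
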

\begin{proof}
  The construction is based on the Hadamard states and the
  Frankl-R{\"odl} theorem, which are commonly used in quantum
  information theory (see e.g.\ \cite{Buhrman1998, Brassard1999,
    Mancinska2013}).  

  Let $V$ be the set of Hadamard states, i.e.\ the set of vectors of
  the form $\frac{1}{\sqrt{d}} \left ( \pm 1, \pm 1, \ldots, \pm 1
  \right )^T$ and let $\Ket{\psi} = \left ( 1, 0, 0, \ldots, 0 \right
  )^T$.  There are $2^d$ vectors in $V$ and each vector $\Ket{a} \in
  V$ satisfies $\QProb{a}{\psi} = \frac{1}{d}$, so this is also the
  minimum.  The orthogonality graph $G = (V,E)$ of $V$ is known in the
  literature as a Hadamard graph \cite{Mancinska2013}.  It follows
  from the Frankl-R{\"o}dl theorem (theorem 1.11 in \cite{Frankl1987})
  that, for $d$ divisible by $4$, there exists an $\epsilon > 0$ such
  that $\alpha(G) \leq (2-\epsilon)^d$ and thus
  corollary~\ref{cor:bound} implies
  \begin{equation}
    \bar{k}(\psi) \leq \frac{2 (2 - \epsilon)^d}{2^d \frac{1}{d}} = 2
    d e^{-cd}, 
  \end{equation}
  where $c = \ln 2 - \ln (2-\epsilon)$ is a positive constant.
\end{proof}

\emph{Remarks.} When $d$ is not divisible by $4$, the Hadamard states
of dimension $\tilde{d} = 4 \left \lfloor \frac{d}{4} \right \rfloor$
can be embedded in $\mathbb{C}^d$ by setting the remaining components
of the vectors to zero.  This yields $\bar{k}(\psi) \leq 2 \tilde{d}
e^{-c \tilde{d}}$ for all Hilbert space dimensions.

Since $\bar{k}(\psi)$ is an average over $V$, there must exist at
least one $\Ket{a} \in V$ such that $k(\psi,a) \leq \bar{k}(\psi) \leq
2 \tilde{d} e^{-c \tilde{d}}$.  It is fairly reasonable to assume that
$k(\psi,a)$ only depends on $\QProb{a}{\psi}$, in which case the bound
$k(\psi,a) \leq 2 \tilde{d} e^{-c \tilde{d}}$ would apply to all
$\Ket{a} \in V$, since $\QProb{a}{\psi} = 1/d$ is the same for the
states used in this construction.

\emph{Conclusions.}  In this letter, I have exhibited a family of
states for which the ratio of classical to quantum overlaps must be
$\leq 2e^{-cd}$ in Hilbert space dimensions $d$ that are divisible by
$4$, and where $c$ is a positive constant.  This represents an
exponential improvement in asymptotic scaling over the previous result
of $4/(d-1)$ \cite{Barrett2013}.  This presents a severe problem for the
$\psi$-epistemic explanation of quantum indistinguishability, as the
portion of the indistinguishability that can be accounted for by the
overlap of probability measures decreases rapidly in large Hilbert
space dimension.  It would be interesting to further pin down the
value of $c$ to see which result gives the best bound for small
Hilbert space dimension, for which the results may be amenable to
experimental test.  Similarly, the connection to contextuality could
be further exploited to derive additional bounds.

Finally, Montina has derived an upper bound on the classical
communication complexity of simulating a qubit identity channel using
the existence of a $\psi$-epistemic ontological model for $d=2$
\cite{Montina2012}.  It would be interesting to determine if lower
bounds on this task in higher dimensions could be derived from overlap
bounds.

\begin{acknowledgments}
  I would like to thank Jonathan Barrett, Owen Maroney, Alberto
  Montina, Simone Severini and Rob Spekkens for useful discussions.
  This research was supported in part by the FQXi Large Grant ``Time
  and the Structure of Quantum Theory'' and by Perimeter Institute for
  Theoretical Physics.  Research at Perimeter Institute is supported
  by the Government of Canada through Industry Canada and by the
  Province of Ontario through the Ministry of Research and Innovation.
\end{acknowledgments}

\bibliography{exp-epistemic}

\end{document}